\documentclass[lettersize,journal]{IEEEtran}

\usepackage[utf8]{inputenc}
\usepackage[T1]{fontenc}
\usepackage{color}
\usepackage{cite} 
\usepackage{citesort} 
\usepackage{array}
\usepackage{graphicx}
\usepackage{algorithm}
\usepackage{algorithmic}
\usepackage{tikz}
\usetikzlibrary{arrows.meta, shapes.geometric, calc, positioning, shadows}
\usepackage{amsmath, amssymb, amsfonts}
\usepackage{amsmath,amssymb,amsthm} 
\newtheorem{proposition}{Proposition}
\newtheorem{remark}{Remark}

\usetikzlibrary{calc,arrows.meta,positioning,backgrounds,fit}
\usepackage{subcaption}

\begin{document}

\title{Secrecy Outage Analysis over Correlated Composite Generalized-Gamma Fading Channels}

\author{Muhammad Adil, Pranavkumar Pathak, and Salman A. Alqahtani
\thanks{}
\thanks{Muhammad Adil is with the Department of Electronics Engineering,
University of Rome Tor Vergata, 00133 Rome, Italy
(e-mail: muhammad.adil@uniroma2.it).}
\thanks{Pranavkumar Pathak is with the School of Continuing Studies, McGill University, Montreal, QC H3A 2M7, Canada (e-mail: pranavpp@gmail.com).}
\thanks{Salman A. AlQahtani is with the New Emerging Technologies and 5G Network and Beyond Research Chair, Department of Computer Engineering, College of Computer and Information Sciences, King Saud University, Riyadh, Saudi Arabia (email: salmanq@ksu.edu.sa).}
}

\markboth{}%
{Shell \MakeLowercase{\textit{et al.}}: A Sample Article Using IEEEtran.cls for IEEE Journals}

\maketitle

\begin{abstract}
This paper investigates physical-layer security (PLS) over correlated composite generalized-Gamma (GG)/GG fading channels, where both shadowing and small-scale fading follow GG distributions. Using Mellin transforms and Fox-$H$ functions, closed-form expressions are derived for the single-link probability density function (PDF), joint distribution, survival function, and zero-rate secrecy outage probability (SOP)/probability of non-zero secrecy capacity (PNZSC). The general-rate SOP is expressed as an exact double series with one residual one-dimensional integral per term. The model includes the Nakagami-$m$/GG and Nakagami-$m$/Gamma channels as special cases. Numerical results validate the analysis and demonstrate the impact of the fading parameters on secrecy performance.
\end{abstract}

\begin{IEEEkeywords}
Physical layer security, composite generalized-Gamma fading, Mellin transform, Fox-\(H\) function.
\end{IEEEkeywords}

\section{Introduction}

\IEEEPARstart{P}{hysical}-layer security (PLS) has emerged as a promising complement to conventional cryptographic techniques by exploiting the randomness of wireless channels to enhance confidentiality in sixth-generation (6G) wireless communication networks~\cite{11036803}. Following Wyner's wiretap-channel formulation \cite{6772207}, secure transmission is possible when the legitimate receiver experiences a channel advantage over the eavesdropper. However, this advantage is not always guaranteed in practical wireless environments, especially when the legitimate and eavesdropping links are affected by similar propagation conditions, common scatterers, or spatially correlated shadowing. As a result, accurate secrecy performance analysis under correlated fading and shadowing is essential for assessing the reliability of PLS techniques in realistic wireless networks.

Several works have investigated secrecy capacity, secrecy outage probability (SOP), and the probability of non-zero secrecy capacity (PNZSC) over classical fading channels. Correlated Rayleigh and Nakagami-\(m\) channels have been studied to quantify the secrecy degradation caused by spatial correlation \cite{8060499}. Log-normal shadowing models have also been considered for scenarios where large-scale fading dominates the secrecy behavior. More recently, composite fading models have received increasing attention because they jointly capture small-scale multipath fading and large-scale shadowing effects \cite{9239955,9768319}. In particular, the Nakagami-\(m\)/Gamma model provides a mathematically tractable framework for composite fading, while generalized-Gamma shadowing further improves modeling flexibility by including several conventional distributions as special cases \cite{8060499}.

Although these models provide valuable insight, most existing secrecy analyses assume that at least one propagation component remains restricted to a conventional Gamma-based form. For example, small-scale fading is often modeled through Nakagami-\(m\), whose received power is Gamma distributed, while generalized-Gamma (GG) modeling is applied only to the shadowing component. This assumption limits the flexibility of the channel model in environments where both the multipath and shadowing processes deviate from standard Gamma behavior. Such deviations may arise in dense urban deployments, indoor propagation, blockage-prone links, and heterogeneous wireless networks where the fading severity and tail behavior cannot be accurately represented by a single-parameter Nakagami-\(m\) model \cite{8917797}.

To address this limitation, we develop a secrecy outage framework for correlated composite generalized-Gamma/generalized-Gamma (GG/GG) fading channels. The legitimate and eavesdropping links undergo independent GG small-scale fading and correlated shadowing. Using Mellin transforms and Fox-(H) functions, we derive closed-form expressions for the single-link PDF, joint distribution, survival function, and zero-rate SOP/PNZSC, together with an exact double-series representation involving one residual one-dimensional integral per term for the general-rate SOP. The proposed framework unifies and extends several existing secrecy models. By setting the GG small-scale exponent to unity, the model reduces to the Nakagami-\(m\)/GG secrecy model. By further setting the shadowing exponent to unity, it reduces to the classical Nakagami-\(m\)/Gamma composite fading model \cite{8060499}. Therefore, the derived expressions provide both a generalized analytical tool and a consistency check with known results. Numerical and Monte Carlo results are finally used to validate the analysis and to demonstrate the impact of GG fading parameters and shadowing correlation on secrecy outage performance.

The main contributions of this paper are summarized as follows:
\begin{itemize}
    \item A correlated composite GG/GG secrecy model is developed, encompassing the Nakagami-\(m\)/GG and Nakagami-\(m\)/Gamma models as special cases, with closed-form expressions for the single-link PDF, joint distribution, survival function, and zero-rate SOP/PNZSC derived using Mellin transforms and Fox-\(H\) functions.

    \item An exact double-series representation with a one-dimensional integral per term is derived and validated through numerical and Monte Carlo results, which illustrate the effects of the GG fading parameters and shadowing correlation.
\end{itemize}

The rest of this paper is organized as follows. Section~\ref{sec:sysmo} presents the system and channel model. Sections~\ref{sec:slpdf} and~\ref{sec:jointpdf} derive the single-link and joint GG/GG fading distributions, respectively. Section~\ref{sec:secrecy_performance} develops the secrecy outage and PNZSC analyses. Section~\ref{sec:numerical_results} presents the numerical results and discussion, while Section~\ref{sec:conclusion} concludes the paper.

\section{System and Channel Model}
\label{sec:sysmo}

We consider a wiretap system composed of a legitimate transmitter Alice denoted by \(A\), a legitimate receiver denoted by \(B\), and an eavesdropper denoted by \(E\). For \(\ell\in\{\mathrm B,\mathrm E\}\), respectively, the received signal at node \(\ell\) is given by \cite{8060499}
\begin{equation}
    y_\ell=\sqrt{p}\,h_\ell s+n_\ell,
    \label{eq:received_signals}
\end{equation}
where \(p\) denotes the transmit power, \(s\) is the transmitted information symbol satisfying \(\mathbb{E}\{|s|^2\}=1\), \(h_\ell\) is the complex channel coefficient of link \(\ell\), and \(n_\ell\) denotes an additive white Gaussian noise (AWGN) sample with variance \(\sigma_\ell^2\).

The channel amplitude of link \(\ell\) is modeled as a composite fading random variable given by
\begin{equation}
    |h_\ell|=\sqrt{B_\ell}\,W_\ell,
    \label{eq:channel_amplitudes}
\end{equation}
where \(B_\ell\) represents the large-scale shadowing power component, while \(W_\ell\) denotes the small-scale fading amplitude.

The shadowing component is modeled as
\begin{equation}
    B_\ell = Y_\ell^{1/\beta_\ell}, \qquad \beta_\ell>0,
    \label{eq:shadowing_model}
\end{equation}
where \((Y_{\mathrm B},Y_{\mathrm E})\) is a correlated Gamma random pair with marginal shape parameters \(k_{\mathrm B}\) and \(k_{\mathrm E}\), scale parameters \(\theta_{\mathrm B}\) and \(\theta_{\mathrm E}\), and correlation parameter \(\rho\in[0,1)\), which characterizes the dependence of the underlying Gamma pair \cite{5281755}. Since \(Y_\ell\sim\mathrm{Gamma}(k_\ell,\theta_\ell)\), the shadowing power component \(B_\ell=Y_\ell^{1/\beta_\ell}\) follows a generalized-Gamma distribution characterized by the shape parameter \(k_\ell\) and the power exponent \(\beta_\ell\). The parameter \(\beta_\ell\) controls the GG shadowing behavior.

Unlike the conventional Nakagami-\(m\) assumption, where the small-scale fading power is Gamma distributed, i.e., \(W_\ell^2\sim\mathrm{Gamma}(m_\ell,\Omega_\ell/m_\ell)\), with \(\Omega_\ell\triangleq\mathbb{E}\{W_\ell^2\}\), we model the small-scale fading power as a GG random variable. Specifically,
\begin{equation}
    W_\ell^2 = V_\ell^{1/\kappa_\ell},\qquad
    V_\ell\sim \mathrm{Gamma}(m_\ell,\tau_\ell),\qquad
    \kappa_\ell>0,
    \label{eq:smallscaleGG}
\end{equation}
where \(m_\ell\) is the small-scale fading shape parameter, \(\tau_\ell\) is the scale parameter of the auxiliary Gamma random variable \(V_\ell\), and \(\kappa_\ell\) controls the GG small-scale fading behavior. The random variables \(V_{\mathrm B}\) and \(V_{\mathrm E}\) are assumed to be mutually independent and also independent of the correlated shadowing pair \((Y_{\mathrm B},Y_{\mathrm E})\). When \(\kappa_\ell=1\), the model in \eqref{eq:smallscaleGG} reduces to Nakagami-\(m\) small-scale fading. Furthermore, setting \(\kappa_{\mathrm B}=\kappa_{\mathrm E}=1\) and \(\beta_{\mathrm B}=\beta_{\mathrm E}=1\) recovers the classical Nakagami-\(m\)/Gamma composite fading model \cite{8060499}.

The scale parameter \(\tau_\ell\) is chosen to satisfy the prescribed average small-scale fading power \(\Omega_\ell\). Therefore, with \(\Gamma(\cdot)\) denoting the Gamma function,
\begin{equation}
\tau_\ell =
\left(
\Omega_\ell
\frac{\Gamma(m_\ell)}
{\Gamma(m_\ell+1/\kappa_\ell)}
\right)^{\kappa_\ell}.
\label{eq:tautomean}
\end{equation}

Let \(c_\ell\triangleq p/\sigma_\ell^2\). The average received SNR
of link \(\ell\) is
\begin{equation}
\bar{\gamma}_\ell
=
c_\ell\Omega_\ell\,
\theta_\ell^{1/\beta_\ell}
\frac{\Gamma\!\left(k_\ell+1/\beta_\ell\right)}
{\Gamma(k_\ell)}.
\label{eq:average_snr}
\end{equation}
Hence, for a prescribed average SNR \(\bar{\gamma}_\ell\), the
shadowing scale parameter is selected as
\begin{equation}
\theta_\ell
=
\left[
\frac{\bar{\gamma}_\ell\Gamma(k_\ell)}
{c_\ell\Omega_\ell
\Gamma\!\left(k_\ell+1/\beta_\ell\right)}
\right]^{\beta_\ell}.
\label{eq:theta_from_snr}
\end{equation}

For analytical convenience, we henceforth use the normalized setting
\(c_\ell=\Omega_\ell=1\). For every prescribed average SNR
\(\bar{\gamma}_\ell\), the scale parameters \(\theta_\ell\) and
\(\tau_\ell\) are determined from \eqref{eq:theta_from_snr} and
\eqref{eq:tautomean}, respectively. Consequently,
\begin{equation}
\begin{aligned}
\gamma_{\mathrm B}
&=B_{\mathrm B}W_{\mathrm B}^2
=Y_{\mathrm B}^{1/\beta_{\mathrm B}}
V_{\mathrm B}^{1/\kappa_{\mathrm B}},\\
\gamma_{\mathrm E}
&=B_{\mathrm E}W_{\mathrm E}^2
=Y_{\mathrm E}^{1/\beta_{\mathrm E}}
V_{\mathrm E}^{1/\kappa_{\mathrm E}}.
\end{aligned}
\label{eq:gammadef}
\end{equation}
with the SOP and PNZSC, for a target secrecy rate \(r\geq0\), defined as
\begin{equation}
P_o(r)\triangleq
\Pr\!\left[
\gamma_{\mathrm B}
\leq 2^r(1+\gamma_{\mathrm E})-1
\right],
\qquad
\mathrm{PNZSC}\triangleq 1-P_o(0).
\label{eq:SOP_PNZSC_def}
\end{equation}

\section{Single-Link Composite GG/GG PDF}
\label{sec:slpdf}

\subsection{Mellin Transform}

For a single link, let \(Z=BU\), where \(B=Y^{1/\beta}\), \(Y\sim\mathrm{Gamma}(k,\theta)\), \(U=V^{1/\kappa}\), and \(V\sim\mathrm{Gamma}(m,\tau)\), with \(Y\) and \(V\) independent. For any positive random variable \(X\), define its Mellin transform as \(\mathcal M_X(\zeta)\triangleq\mathbb E[X^{\zeta-1}]\), where \(\zeta\in\mathbb C\) is the Mellin-transform variable. The corresponding Mellin transforms are
\begin{equation}
\begin{aligned}
\mathcal M_Y(\zeta)
&=\theta^{\zeta-1}\frac{\Gamma(k+\zeta-1)}{\Gamma(k)},\\
\mathcal M_B(\zeta)
&=\theta^{\frac{\zeta-1}{\beta}}
\frac{\Gamma\!\left(k+\frac{\zeta-1}{\beta}\right)}{\Gamma(k)}.
\end{aligned}
\label{eq:mellin_Y_B}
\end{equation}
Similarly,
\begin{equation}
\begin{aligned}
\mathcal M_V(\zeta)
&=\tau^{\zeta-1}\frac{\Gamma(m+\zeta-1)}{\Gamma(m)},\\
\mathcal M_U(\zeta)
&=\tau^{\frac{\zeta-1}{\kappa}}
\frac{\Gamma\!\left(m+\frac{\zeta-1}{\kappa}\right)}{\Gamma(m)}.
\end{aligned}
\label{eq:mellin_V_U}
\end{equation}
Since \(B\) and \(U\) are independent,
\begin{equation}
\mathcal M_Z(\zeta)
=D^{\zeta-1}
\frac{\Gamma\!\left(k+\frac{\zeta-1}{\beta}\right)
\Gamma\!\left(m+\frac{\zeta-1}{\kappa}\right)}
{\Gamma(k)\Gamma(m)},
\label{eq:MZ}
\end{equation}
where \(D\triangleq\theta^{1/\beta}\tau^{1/\kappa}\).

The expression in \eqref{eq:MZ} reveals a symmetric dependence on the shadowing parameters \((k,\beta)\) and the small-scale fading parameters \((m,\kappa)\), which will be exploited in the subsequent Fox-\(H\) representation.

\subsection{Fox-\(H\) Function Representation}
\begin{proposition}\label{prop:1prime}
For \(z>0\), the PDF of \(Z\) is
\begin{equation}
\begin{aligned}
\kappa(z;k,m,\beta,\kappa_{\mathrm e},D)
&=\frac{1}{D\Gamma(k)\Gamma(m)}
H^{2,0}_{0,2}\!\left[
\frac{z}{D}\;\middle|\;-\,;
\right.\\[-1mm]
&\qquad\left.
\left(k-\frac{1}{\beta},\frac{1}{\beta}\right),
\left(m-\frac{1}{\kappa_{\mathrm e}},
\frac{1}{\kappa_{\mathrm e}}\right)
\right].
\end{aligned}
\label{eq:PDFgggg}
\end{equation}
Here, \((a_u,A_u)\) and \((b_v,B_v)\) denote the upper and lower
parameter pairs, respectively, in the standard Fox \(H\)-function
notation(We write \(\kappa_{\mathrm e}\) for the small-scale GG exponent to avoid clashing with the kernel symbol \(\kappa(\cdot)\).
\end{proposition}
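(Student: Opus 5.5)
The plan is to recover the PDF of \(Z\) from its Mellin transform \eqref{eq:MZ} by Mellin inversion, and then match the resulting contour integral with the defining Mellin--Barnes integral of the Fox \(H\)-function. For a positive random variable with PDF \(f_Z\), the definition \(\mathcal M_Z(\zeta)=\mathbb E[Z^{\zeta-1}]=\int_0^\infty z^{\zeta-1}f_Z(z)\,dz\) is exactly the Mellin transform of \(f_Z\). Hence
\begin{equation*}
f_Z(z)=\frac{1}{2\pi i}\int_{\mathcal L}\mathcal M_Z(\zeta)\,z^{-\zeta}\,d\zeta ,
\end{equation*}
where \(\mathcal L\) is a vertical line \(\Re\zeta=c\) inside the strip of analyticity. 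Since \(k,m>0\), the poles of \(\Gamma(k+(\zeta-1)/\beta)\) and \(\Gamma(m+(\zeta-1)/\kappa_{\mathrm e})\) lie at \(\Re\zeta\le 1-\min(k\beta,m\kappa_{\mathrm e})<1\), so \(c=1\) is admissible. Inversion is justified because \(|\mathcal M_Z(c+it)|\) decays exponentially in \(|t|\) by Stirling's formula, so it is integrable on \(\mathcal L\).

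Next, I insert \eqref{eq:MZ} and write \(D^{\zeta-1}z^{-\zeta}=D^{-1}(z/D)^{-\zeta}\). Then I compare with the standard definition
\begin{equation*}
H^{2,0}_{0,2}\!\left[x\,\middle|\,-;(b_1,B_1),(b_2,B_2)\right]=\frac{1}{2\pi i}\int_{\mathcal L}\Gamma(b_1+B_1s)\Gamma(b_2+B_2s)\,x^{-s}\,ds .
\end{equation*}
The Gamma arguments are rewritten as \(k+(\zeta-1)/\beta=(k-1/\beta)+\zeta/\beta\) and \(m+(\zeta-1)/\kappa_{\mathrm e}=(m-1/\kappa_{\mathrm e})+\zeta/\kappa_{\mathrm e}\). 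This gives \(b_1=k-1/\beta\), \(B_1=1/\beta\), \(b_2=m-1/\kappa_{\mathrm e}\), \(B_2=1/\kappa_{\mathrm e}\), with \(s=\zeta\) and \(x=z/D\). The constant \(1/(D\Gamma(k)\Gamma(m))\) factors out of the integral, which yields \eqref{eq:PDFgggg}.

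The main obstacle is checking that the Fox \(H\)-function is well defined and that the contour is legitimate. The contour must separate the poles of \(\Gamma(b_j+B_js)\), which lie at \(s=-(b_j+n)/B_j\), from the (absent) poles of the numerator \(\Gamma(1-a_u-A_us)\) factors. Because \(p=0\), this reduces to placing \(\mathcal L\) to the right of all poles, and \(c=1\) does that. Convergence follows from the Fox \(H\) parameter \(a^*=B_1+B_2=1/\beta+1/\kappa_{\mathrm e}>0\). With \(a^*>0\), the integral converges absolutely for all \(z>0\) with \(|\arg(z/D)|<a^*\pi/2\), which includes the positive real axis.

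As a sanity check, I will verify two things. First, \(\int_0^\infty z^{\zeta-1}\) applied to the right-hand side returns \eqref{eq:MZ}, using the Mellin transform of \(H\). Second, setting \(\beta=\kappa_{\mathrm e}=1\) gives \(H^{2,0}_{0,2}=2(z/D)^{(k+m-2)/2}K_{k-m}(2\sqrt{z/D})\), which recovers the known \(K\)-distribution PDF of the Nakagami-\(m\)/Gamma model.
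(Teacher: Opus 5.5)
Your proposal is correct and follows essentially the paper's route. The paper computes the Mellin transform of the candidate $\phi(z/D)/D$ via the Fox-$H$ Mellin identity and matches it to \eqref{eq:MZ}, implicitly relying on uniqueness of the Mellin transform, whereas you run the same identification in the inversion direction. Your additional details are all correct and make the argument more complete than the paper's: the admissible contour $\Re\zeta=1$ lying to the right of all poles, the Stirling decay that justifies inversion, $a^*=1/\beta+1/\kappa_{\mathrm e}>0$, and the $K$-distribution check at $\beta=\kappa_{\mathrm e}=1$.
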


\begin{proof}
Write \(\Gamma\!\left(k+\frac{\zeta-1}{\beta}\right)
=\Gamma\!\left(k-\frac{1}{\beta}+\frac{\zeta}{\beta}\right)\) and
\(\Gamma\!\left(m+\frac{\zeta-1}{\kappa_{\mathrm e}}\right)
=\Gamma\!\left(m-\frac{1}{\kappa_{\mathrm e}}+\frac{\zeta}{\kappa_{\mathrm e}}\right)\).
By the Mellin--Barnes definition of the Fox \(H\)-function and its
Mellin-transform identity \cite{mathai2009h}, the function
\[
\phi(z)=H^{2,0}_{0,2}\!\left[
z\;\middle|\;-\,;
\left(k-\frac{1}{\beta},\frac{1}{\beta}\right),
\left(m-\frac{1}{\kappa_{\mathrm e}},\frac{1}{\kappa_{\mathrm e}}\right)
\right]
\]
satisfies
\[
\int_0^\infty z^{\zeta-1}\phi(z)\,dz
=
\Gamma\!\left(k+\frac{\zeta-1}{\beta}\right)
\Gamma\!\left(m+\frac{\zeta-1}{\kappa_{\mathrm e}}\right).
\]
Applying the Mellin scaling property to \(\phi(z/D)/D\) and matching the
resulting Mellin transform with \eqref{eq:MZ} gives \eqref{eq:PDFgggg}.
\end{proof}

\begin{remark}
The PDF in \eqref{eq:PDFgggg} reduces to the Nakagami-\(m\)/generalized-Gamma composite model for \(\kappa_{\mathrm e}=1\), and further to the classical Nakagami-\(m\)/Gamma model for \(\kappa_{\mathrm e}=\beta=1\).
\end{remark}

\section{Joint PDF under Correlated GG Shadowing and Independent GG Small-Scale Fading}
\label{sec:jointpdf}

Only the shadowing pair \((Y_{\mathrm B},Y_{\mathrm E})\) is correlated; \(V_{\mathrm B}\) and \(V_{\mathrm E}\) remain independent of each other and of \((Y_{\mathrm B},Y_{\mathrm E})\) \cite{5281755}. The correlated-Gamma expansion can be written as a double mixture of the normalized single-link kernels in Proposition~\ref{prop:1prime}. For \(i,j\in\mathbb N_0\triangleq\{0,1,2,\ldots\}\), and using the Pochhammer symbol \((a)_n\triangleq\Gamma(a+n)/\Gamma(a)\), the coefficient inherited from the correlated-Gamma expansion for the eavesdropping link satisfies \(\displaystyle \frac{1}{(i+k_{\mathrm E})_j\,\Gamma(i+k_{\mathrm E})}=\frac{1}{\Gamma(i+j+k_{\mathrm E})}\). Hence, the factor \((i+k_{\mathrm E})_j^{-1}\) is already absorbed into the normalization of the eavesdropping-link kernel with shape \(i+j+k_{\mathrm E}\) and must not appear again in the mixture weight.

The conditional scales are \(\Theta_\ell\triangleq(1-\rho)\theta_\ell\), and \(D_\ell=\Theta_\ell^{1/\beta_\ell}\tau_\ell^{1/\kappa_\ell}\).
where \(\tau_\ell\) follows from the mean-power normalization in
\eqref{eq:tautomean}. The normalized mixture weights are defined as
\begin{equation}
w_{ij}
=
\frac{(1-\rho)^{k_{\mathrm E}}
(k_{\mathrm B})_i
(k_{\mathrm E}-k_{\mathrm B})_j
\rho^{i+j}}
{i!\,j!},
\qquad i,j\in\mathbb N_0.
\label{eq:corrected_weights}
\end{equation}
Therefore, by \cite[eq. (3)]{5281755}, for \(x_{\mathrm B},x_{\mathrm E}\geq0\),
\begin{equation}
\begin{aligned}
f_{\gamma_{\mathrm B},\gamma_{\mathrm E}}
(x_{\mathrm B},x_{\mathrm E})
&=\sum_{i=0}^{\infty}\sum_{j=0}^{\infty} w_{ij}\,
\kappa(x_{\mathrm B};i+k_{\mathrm B},m_{\mathrm B},
\beta_{\mathrm B},\kappa_{\mathrm B},D_{\mathrm B})\\
&\quad\times
\kappa(x_{\mathrm E};i+j+k_{\mathrm E},m_{\mathrm E},
\beta_{\mathrm E},\kappa_{\mathrm E},D_{\mathrm E}).
\end{aligned}
\label{eq:jointgggg}
\end{equation}

For \(k_{\mathrm E}\geq k_{\mathrm B}>0\), the weights in \eqref{eq:corrected_weights} are nonnegative and normalized, since
\begin{equation}
\begin{aligned}
\sum_{i=0}^{\infty}\sum_{j=0}^{\infty}w_{ij}
&=(1-\rho)^{k_{\mathrm E}}
\left[\sum_{i=0}^{\infty}
\frac{(k_{\mathrm B})_i}{i!}\rho^i\right]
\left[\sum_{j=0}^{\infty}
\frac{(k_{\mathrm E}-k_{\mathrm B})_j}{j!}\rho^j\right]\\
&=(1-\rho)^{k_{\mathrm E}}
(1-\rho)^{-k_{\mathrm B}}
(1-\rho)^{-(k_{\mathrm E}-k_{\mathrm B})}
=1.
\end{aligned}
\label{eq:weights_normalization}
\end{equation}
The double-series representation is therefore used under the ordering
\(k_{\mathrm E}\geq k_{\mathrm B}>0\). When \(k_{\mathrm B}>k_{\mathrm E}\), the analogous expansion obtained by interchanging the Gamma-pair indices must be used.

\section{Secrecy Performance Analysis}
\label{sec:secrecy_performance}

\subsection{Closed-Form Composite Survival Function}

\begin{proposition}\label{prop:3prime}
For \(y\geq0\), let
\[
G(y;k,m,\beta,\kappa_{\mathrm e},D)
\triangleq
\int_y^\infty
\kappa(x;k,m,\beta,\kappa_{\mathrm e},D)\,dx .
\]
Define \(\mathbf b\triangleq\left\{\left(m,\frac{1}{\kappa_{\mathrm e}}\right),\left(k,\frac{1}{\beta}\right),(0,1)\right\}\). Then, the composite survival function is
\begin{equation}
G(y;k,m,\beta,\kappa_{\mathrm e},D)
=
\frac{1}{\Gamma(k)\Gamma(m)}
H^{3,0}_{1,3}\!\left[
\frac{y}{D}\;\middle|\;(1,1)\,;\mathbf b
\right].
\label{eq:Gprime}
\end{equation}
\end{proposition}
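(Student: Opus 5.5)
The plan is to integrate the Mellin--Barnes representation of the kernel in Proposition~\ref{prop:1prime} term by term. From \eqref{eq:PDFgggg}, after the substitution \(s\mapsto\) the Mellin variable, we have
\[
\kappa(x;k,m,\beta,\kappa_{\mathrm e},D)=\frac{1}{D\Gamma(k)\Gamma(m)}\frac{1}{2\pi i}\int_{\mathcal L}\Gamma\!\left(k-\tfrac{1}{\beta}+\tfrac{s}{\beta}\right)\Gamma\!\left(m-\tfrac{1}{\kappa_{\mathrm e}}+\tfrac{s}{\kappa_{\mathrm e}}\right)\left(\tfrac{x}{D}\right)^{-s}ds,
\]
with \(\mathcal L\) a vertical line \(\Re s=c\) lying to the right of all poles, i.e.\ \(c>\max(1-k\beta,1-m\kappa_{\mathrm e})\). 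I will additionally choose \(c>1\), which is allowed because the integrand has no poles to the right of these thresholds. Then \(\int_y^\infty (x/D)^{-s}dx = D\,(y/D)^{1-s}/(s-1)\) converges for \(\Re s>1\), and interchanging the \(x\)- and \(s\)-integrals is justified by Fubini. The Gamma product decays exponentially in \(|\Im s|\), and \((x/D)^{-c}\) is integrable on \([y,\infty)\) for \(y>0\).

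After the exchange, the factor \(D\) cancels, leaving
\[
G=\frac{1}{\Gamma(k)\Gamma(m)}\frac{1}{2\pi i}\int_{\mathcal L}\frac{\Gamma\!\left(k+\frac{s-1}{\beta}\right)\Gamma\!\left(m+\frac{s-1}{\kappa_{\mathrm e}}\right)}{s-1}\left(\tfrac{y}{D}\right)^{1-s}ds.
\]
Next substitute \(t=s-1\), which shifts the contour to \(\Re t=c-1>0\), and write \(1/t=\Gamma(t)/\Gamma(1+t)\). The integrand becomes
\[
\frac{\Gamma\!\left(k+\frac{t}{\beta}\right)\Gamma\!\left(m+\frac{t}{\kappa_{\mathrm e}}\right)\Gamma(t)}{\Gamma(1+t)}\left(\tfrac{y}{D}\right)^{-t}.
\]
Comparing with the Mellin--Barnes definition of \(H^{3,0}_{1,3}\) (with \(\Gamma(b_j+B_jt)\) in the numerator and \(\Gamma(a_1+A_1t)\), i.e.\ \(1-a_1-A_1(-t)\) under the convention of \cite{mathai2009h}, in the denominator), this is exactly the form with lower pairs \(\mathbf b=\{(m,1/\kappa_{\mathrm e}),(k,1/\beta),(0,1)\}\) and upper pair \((1,1)\). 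That yields \eqref{eq:Gprime}. The ordering of the lower pairs is immaterial because all three sit in the numerator, and the contour separates all poles as required since they lie at \(\Re t\le 0\).

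The main obstacle is bookkeeping rather than analysis: the sign convention of the \(H\)-function kernel (\(z^{-s}\) versus \(z^{s}\)) and making the shifted contour satisfy the standard \(H\)-function contour conditions. For the latter, the pole of \(\Gamma(t)\) at \(t=0\) must lie to the left of \(\mathcal L\), which is why \(c>1\) is imposed from the start.

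As checks, I will verify the case \(y\to0^+\). There the residue at \(t=0\) gives \(\Gamma(k)\Gamma(m)\) and hence \(G(0)=1\), consistent with normalization; the other poles contribute vanishing powers of \(y\) when \(k\beta,m\kappa_{\mathrm e}>0\). I will also verify that \(-\partial_yG\) recovers \eqref{eq:PDFgggg} via the standard differentiation formula for the \(H\)-function. The case \(y=0\) itself follows by continuity.
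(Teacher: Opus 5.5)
Your proposal is correct and follows the paper's proof essentially step for step: integrate the Mellin--Barnes representation of \eqref{eq:PDFgggg} over $[y,\infty)$ on a contour with real part exceeding $1$, shift the variable by one, write $1/t=\Gamma(t)/\Gamma(1+t)$, and read off $H^{3,0}_{1,3}$ with upper pair $(1,1)$. Your Fubini justification, the $y\to0^+$ residue check giving $G(0)=1$, and the continuity argument at $y=0$ are sound additions the paper omits; one cosmetic slip is that the aside ``i.e.\ $1-a_1-A_1(-t)$'' is spurious, since for $n=0$ the upper pair enters the denominator directly as $\Gamma(a_1+A_1t)=\Gamma(1+t)$, which is what you actually use.
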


\begin{proof}
Using the Mellin--Barnes representation of \eqref{eq:PDFgggg}, with \(\mathrm i\triangleq\sqrt{-1}\) and \(\mathcal L_\zeta\) denoting a suitable vertical Mellin--Barnes contour,
\begin{equation}
\begin{aligned}
\kappa(x;k,m,\beta,\kappa_{\mathrm e},D)
&=
\frac{1}{D\Gamma(k)\Gamma(m)}
\frac{1}{2\pi\mathrm i}
\int_{\mathcal L_\zeta}
\Gamma\!\left(k-\frac{1}{\beta}+\frac{\zeta}{\beta}\right)\\
&\quad\times
\Gamma\!\left(m-\frac{1}{\kappa_{\mathrm e}}+
\frac{\zeta}{\kappa_{\mathrm e}}\right)
\left(\frac{x}{D}\right)^{-\zeta}d\zeta .
\end{aligned}
\label{eq:pdf_MB}
\end{equation}
Choosing \(\mathcal L_\zeta\) such that \(\Re(\zeta)>1\), integrating \eqref{eq:pdf_MB} from \(y\) to \(\infty\), and then setting \(v=\zeta-1\), with \(\mathcal L_v\) denoting the shifted contour, give
\begin{equation}
\begin{aligned}
G(y;k,m,\beta,\kappa_{\mathrm e},D)
&=
\frac{1}{\Gamma(k)\Gamma(m)}
\frac{1}{2\pi\mathrm i}
\int_{\mathcal L_v}
\Gamma\!\left(k+\frac{v}{\beta}\right)\\
&\quad\times
\Gamma\!\left(m+\frac{v}{\kappa_{\mathrm e}}\right)
\frac{1}{v}
\left(\frac{y}{D}\right)^{-v}dv .
\end{aligned}
\label{eq:G_MB}
\end{equation}
Since \(1/v=\Gamma(v)/\Gamma(1+v)\), the Mellin--Barnes kernel in
\eqref{eq:G_MB} is precisely that of the Fox \(H\)-function in
\eqref{eq:Gprime}, with lower parameter pairs
\((m,1/\kappa_{\mathrm e})\), \((k,1/\beta)\), and \((0,1)\), and upper parameter pair \((1,1)\). This proves \eqref{eq:Gprime}.
\end{proof}

\subsection{Closed-Form Zero-Rate Outage / PNZSC}

We now generalize the Mellin--Barnes convolution that combines the
legitimate-link survival function with the eavesdropping-link PDF in the concrete \(Q_{ij}\) form.

For a single \((i,j)\) mixture term, let \(I\) and \(J\) denote the corresponding latent mixture indices, and define \(s_{\mathrm B}\triangleq i+k_{\mathrm B}\), \(s_{\mathrm E}\triangleq i+j+k_{\mathrm E}\), \(D_{\mathrm B}\triangleq\Theta_{\mathrm B}^{1/\beta_{\mathrm B}}\tau_{\mathrm B}^{1/\kappa_{\mathrm B}}\), and \(D_{\mathrm E}\triangleq\Theta_{\mathrm E}^{1/\beta_{\mathrm E}}\tau_{\mathrm E}^{1/\kappa_{\mathrm E}}\). Also, let \(G_{\mathrm B,ij}(x)\triangleq G(x;s_{\mathrm B},m_{\mathrm B},\beta_{\mathrm B},\kappa_{\mathrm B},D_{\mathrm B})\) and \(f_{\mathrm E,ij}(x)\triangleq\kappa(x;s_{\mathrm E},m_{\mathrm E},\beta_{\mathrm E},\kappa_{\mathrm E},D_{\mathrm E})\). Then,
\begin{equation}
\begin{aligned}
Q_{ij}
&\triangleq
\Pr[\gamma_{\mathrm B}>\gamma_{\mathrm E}\mid I=i,J=j]\\
&=\int_0^\infty
G_{\mathrm B,ij}(x)\,
f_{\mathrm E,ij}(x)\,dx .
\end{aligned}
\label{eq:Qijdef}
\end{equation}

\begin{proposition}\label{prop:4prime}
For the complex Mellin variable \(q\) satisfying \(\Re(q)>0\), the Mellin transform of the legitimate-link survival function is
\begin{equation}
\mathcal M_{G_{\mathrm B,ij}}(q)
=
\frac{D_{\mathrm B}^q}{q}\,
\frac{\Gamma\!\left(m_{\mathrm B}+\frac{q}{\kappa_{\mathrm B}}\right)
\Gamma\!\left(s_{\mathrm B}+\frac{q}{\beta_{\mathrm B}}\right)}
{\Gamma(m_{\mathrm B})\Gamma(s_{\mathrm B})}.
\label{eq:MG1prime}
\end{equation}
For \(\Re(q)>1-\min\{\kappa_{\mathrm E}m_{\mathrm E},
\beta_{\mathrm E}s_{\mathrm E}\}\), the Mellin transform of the eavesdropping-link PDF is
\begin{equation}
\mathcal M_{f_{\mathrm E,ij}}(q)
=
D_{\mathrm E}^{q-1}\,
\frac{\Gamma\!\left(m_{\mathrm E}+\frac{q-1}{\kappa_{\mathrm E}}\right)
\Gamma\!\left(s_{\mathrm E}+\frac{q-1}{\beta_{\mathrm E}}\right)}
{\Gamma(m_{\mathrm E})\Gamma(s_{\mathrm E})}.
\label{eq:Mkappa2prime}
\end{equation}
Using Mellin--Parseval's identity over a vertical contour \(\mathcal L\) satisfying \(0<\Re(q)<\min\{\kappa_{\mathrm E}m_{\mathrm E},\beta_{\mathrm E}s_{\mathrm E}\}\),
\[
Q_{ij}
=
\frac{1}{2\pi\mathrm i}
\int_{\mathcal L}
\mathcal M_{G_{\mathrm B,ij}}(q)\,
\mathcal M_{f_{\mathrm E,ij}}(1-q)\,dq .
\]
Define \(\mathbf a_{ij}\triangleq\left\{\left(1-m_{\mathrm E},\frac{1}{\kappa_{\mathrm E}}\right),\left(1-s_{\mathrm E},\frac{1}{\beta_{\mathrm E}}\right),(1,1)\right\}\) and \(\mathbf b_{ij}\triangleq\left\{\left(m_{\mathrm B},\frac{1}{\kappa_{\mathrm B}}\right),\left(s_{\mathrm B},\frac{1}{\beta_{\mathrm B}}\right),(0,1)\right\}\).
Then,
\begin{equation}
Q_{ij}
=
\frac{1}
{\Gamma(s_{\mathrm B})\Gamma(m_{\mathrm B})
\Gamma(s_{\mathrm E})\Gamma(m_{\mathrm E})}
H_{3,3}^{3,2}\!\left[
\frac{D_{\mathrm E}}{D_{\mathrm B}}
\;\middle|\;
\mathbf a_{ij}\,;\mathbf b_{ij}
\right].
\label{eq:Qijprime}
\end{equation}
Consequently,
\begin{equation}
\mathrm{PNZSC}
=
\sum_{i=0}^{\infty}\sum_{j=0}^{\infty}w_{ij}Q_{ij},
\qquad
P_o(0)=1-\mathrm{PNZSC}.
\label{eq:PNZSCprime}
\end{equation}
\end{proposition}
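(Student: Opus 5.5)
The plan is to compute the two Mellin transforms directly from \eqref{eq:MZ} and Proposition~\ref{prop:3prime}, combine them with the Mellin--Parseval identity, and then read off the resulting Mellin--Barnes integral as an $H^{3,2}_{3,3}$ function. Summing over the mixture in \eqref{eq:jointgggg} then yields \eqref{eq:PNZSCprime}.

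\emph{Step 1: the two transforms.} For \eqref{eq:MG1prime}, I write $\mathcal M_{G_{\mathrm B,ij}}(q)=\int_0^\infty y^{q-1}G_{\mathrm B,ij}(y)\,dy$ and integrate by parts, using $G'=-\kappa$. This gives $\frac{1}{q}\int_0^\infty y^{q}\,\kappa(y;s_{\mathrm B},\dots)\,dy=\frac1q\,\mathcal M_Z(q+1)$, and substituting $\zeta=q+1$ in \eqref{eq:MZ} with $k\to s_{\mathrm B}$ gives exactly \eqref{eq:MG1prime}. The boundary terms vanish for $\Re(q)>0$: at $0$ because $y^qG\le y^q\to0$, and at $\infty$ because $Z$ has moments of all orders (GG tails decay like $\exp(-c\,y^{\min\{\beta,\kappa_{\mathrm e}\}/2})$-type), so $y^qG(y)\to0$. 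Alternatively, I can obtain \eqref{eq:MG1prime} by reading it off the kernel in \eqref{eq:G_MB} via Mellin inversion. For \eqref{eq:Mkappa2prime}, the eavesdropper transform is \eqref{eq:MZ} with $k\to s_{\mathrm E}$, $D\to D_{\mathrm E}$. Its region of validity comes from requiring the Gamma arguments to have positive real part: $\Re(m_{\mathrm E}+(q-1)/\kappa_{\mathrm E})>0$ and $\Re(s_{\mathrm E}+(q-1)/\beta_{\mathrm E})>0$, which together are the stated condition.

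\emph{Step 2: Parseval and identification.} Since, conditioned on $(I,J)=(i,j)$, the factorization \eqref{eq:jointgggg} shows that $\gamma_{\mathrm B}$ and $\gamma_{\mathrm E}$ are independent with densities $\kappa(\cdot;s_{\mathrm B},\dots)$ and $f_{\mathrm E,ij}$, \eqref{eq:Qijdef} holds. I then apply the Parseval formula $\int_0^\infty g f\,dx=\frac{1}{2\pi\mathrm i}\int_{\mathcal L}\mathcal M_g(q)\mathcal M_f(1-q)\,dq$. The contour must lie in the common strip: $\Re(q)>0$ for $G$, and $\Re(1-q)>1-\min\{\kappa_{\mathrm E}m_{\mathrm E},\beta_{\mathrm E}s_{\mathrm E}\}$ for $f$, i.e. $0<\Re(q)<\min\{\kappa_{\mathrm E}m_{\mathrm E},\beta_{\mathrm E}s_{\mathrm E}\}$, which is nonempty. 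Substituting $1-q$ gives $\mathcal M_{f_{\mathrm E,ij}}(1-q)=D_{\mathrm E}^{-q}\Gamma(m_{\mathrm E}-q/\kappa_{\mathrm E})\Gamma(s_{\mathrm E}-q/\beta_{\mathrm E})/(\Gamma(m_{\mathrm E})\Gamma(s_{\mathrm E}))$. Writing $1/q=\Gamma(q)/\Gamma(1+q)$, the integrand becomes
\[
\frac{\Gamma(m_{\mathrm B}+\tfrac{q}{\kappa_{\mathrm B}})\Gamma(s_{\mathrm B}+\tfrac{q}{\beta_{\mathrm B}})\Gamma(q)\,\Gamma(1-(1-m_{\mathrm E})-\tfrac{q}{\kappa_{\mathrm E}})\Gamma(1-(1-s_{\mathrm E})-\tfrac{q}{\beta_{\mathrm E}})}{\Gamma(1+q)}\left(\frac{D_{\mathrm E}}{D_{\mathrm B}}\right)^{-q}.
\]
This is precisely the Fox-$H$ kernel with $m=3$ lower pairs $\mathbf b_{ij}$, $n=2$ leading upper pairs $(1-m_{\mathrm E},1/\kappa_{\mathrm E})$, $(1-s_{\mathrm E},1/\beta_{\mathrm E})$, and the remaining upper pair $(1,1)$ contributing $1/\Gamma(1+q)$. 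The contour separates the left poles (of $\Gamma(q)$ and the $\mathbf b$-Gammas) from the right poles (at $q\ge\min\{\kappa_{\mathrm E}m_{\mathrm E},\beta_{\mathrm E}s_{\mathrm E}\}$), as the $H$-function definition requires. This yields \eqref{eq:Qijprime} with the stated normalization.

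\emph{Step 3: mixture.} Finally, $\mathrm{PNZSC}=\Pr[\gamma_{\mathrm B}>\gamma_{\mathrm E}]=\iint_{x_{\mathrm B}>x_{\mathrm E}} f_{\gamma_{\mathrm B},\gamma_{\mathrm E}}$. Inserting \eqref{eq:jointgggg} and exchanging the sum and the integral is justified by Tonelli, since $w_{ij}\ge0$ for $k_{\mathrm E}\ge k_{\mathrm B}$ and the kernels are densities. This gives \eqref{eq:PNZSCprime}.

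I expect the main obstacle to be the rigorous justification of the Parseval step: absolute convergence of the contour integral along $\mathcal L$, and membership of $G_{\mathrm B,ij}$ and $f_{\mathrm E,ij}$ in the appropriate weighted $L^1$/$L^2$ classes on the strip. I would handle this via Stirling's formula, which shows the integrand decays exponentially in $|\Im q|$ because the $H$-function parameter $\sum B_v-\sum A_u$ type quantity ($\Delta$/$a^*$) equals $1/\kappa_{\mathrm B}+1/\beta_{\mathrm B}+1/\kappa_{\mathrm E}+1/\beta_{\mathrm E}>0$.
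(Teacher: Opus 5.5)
Your proposal is correct and takes essentially the same route as the paper. The paper also obtains the survival-function transform from the tail identity $\int_0^\infty x^{q-1}\Pr[Z>x]\,dx=\mathbb E[Z^q]/q$ (your integration by parts just proves it) and reads the eavesdropper transform off \eqref{eq:MZ}. It then applies Mellin--Parseval on the strip $0<\Re(q)<\min\{\kappa_{\mathrm E}m_{\mathrm E},\beta_{\mathrm E}s_{\mathrm E}\}$, uses $1/q=\Gamma(q)/\Gamma(1+q)$ to recognize the $H^{3,2}_{3,3}$ kernel, and sums over the mixture weights. Your extra justifications (vanishing boundary terms, Tonelli for the sum--integral exchange, and absolute convergence via $a^*=1/\kappa_{\mathrm B}+1/\beta_{\mathrm B}+1/\kappa_{\mathrm E}+1/\beta_{\mathrm E}>0$) fill in details the paper leaves implicit.
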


\begin{proof}
The Mellin transform in \eqref{eq:MG1prime} follows from the general tail identity
\[
\int_0^\infty x^{q-1}\Pr[Z>x]\,dx
=
\frac{\mathbb E[Z^q]}{q},
\qquad \Re(q)>0,
\]
together with the \(q\)-th moment obtained from \eqref{eq:MZ}. Equation~\eqref{eq:Mkappa2prime} follows directly from \eqref{eq:MZ} after replacing \(k\), \(m\), \(\beta\), \(\kappa\), and \(D\) by \(s_{\mathrm E}\), \(m_{\mathrm E}\), \(\beta_{\mathrm E}\), \(\kappa_{\mathrm E}\), and \(D_{\mathrm E}\), respectively.

Substituting \eqref{eq:MG1prime} and \eqref{eq:Mkappa2prime} into the Parseval integral gives
\begin{equation}
\begin{aligned}
&Q_{ij}\Gamma(s_{\mathrm B})\Gamma(m_{\mathrm B})
\Gamma(s_{\mathrm E})\Gamma(m_{\mathrm E})\\
&=
\frac{1}{2\pi\mathrm i}
\int_{\mathcal L}
\frac{
\Gamma\!\left(m_{\mathrm B}+\frac{q}{\kappa_{\mathrm B}}\right)
\Gamma\!\left(s_{\mathrm B}+\frac{q}{\beta_{\mathrm B}}\right)}
{q}\\
&\quad\times
\Gamma\!\left(m_{\mathrm E}-\frac{q}{\kappa_{\mathrm E}}\right)
\Gamma\!\left(s_{\mathrm E}-\frac{q}{\beta_{\mathrm E}}\right)
\left(\frac{D_{\mathrm B}}{D_{\mathrm E}}\right)^q dq .
\end{aligned}
\label{eq:Qij_MB_integral}
\end{equation}
Using \(\frac{1}{q}=\frac{\Gamma(q)}{\Gamma(1+q)}\) and \(\left(\frac{D_{\mathrm B}}{D_{\mathrm E}}\right)^q=\left(\frac{D_{\mathrm E}}{D_{\mathrm B}}\right)^{-q}\), the factors in \eqref{eq:Qij_MB_integral} are identified with the standard Mellin--Barnes definition of the Fox \(H\)-function as follows. By comparison with the standard Mellin--Barnes representation, the Fox-\(H\) orders are \(p_{\mathrm H}=q_{\mathrm H}=m_{\mathrm H}=3\) and \(n_{\mathrm H}=2\), with
\[
\begin{gathered}
(a_1,A_1)=(1-m_{\mathrm E},\kappa_{\mathrm E}^{-1}),\quad
(a_2,A_2)=(1-s_{\mathrm E},\beta_{\mathrm E}^{-1}),\\
(a_3,A_3)=(1,1),\quad
(b_1,B_1)=(m_{\mathrm B},\kappa_{\mathrm B}^{-1}),\\
(b_2,B_2)=(s_{\mathrm B},\beta_{\mathrm B}^{-1}),\quad
(b_3,B_3)=(0,1).
\end{gathered}
\]
Collecting two reflected upper-numerator terms, three direct lower-numerator terms, one direct upper-denominator term, and no reflected lower-denominator terms yields the Fox \(H\)-function in \eqref{eq:Qijprime}. Summing over \((i,j)\) with the normalized weights in \eqref{eq:corrected_weights} gives \eqref{eq:PNZSCprime}.
\end{proof}

\begin{figure}[t]
    \centering
    \includegraphics[width=\linewidth]{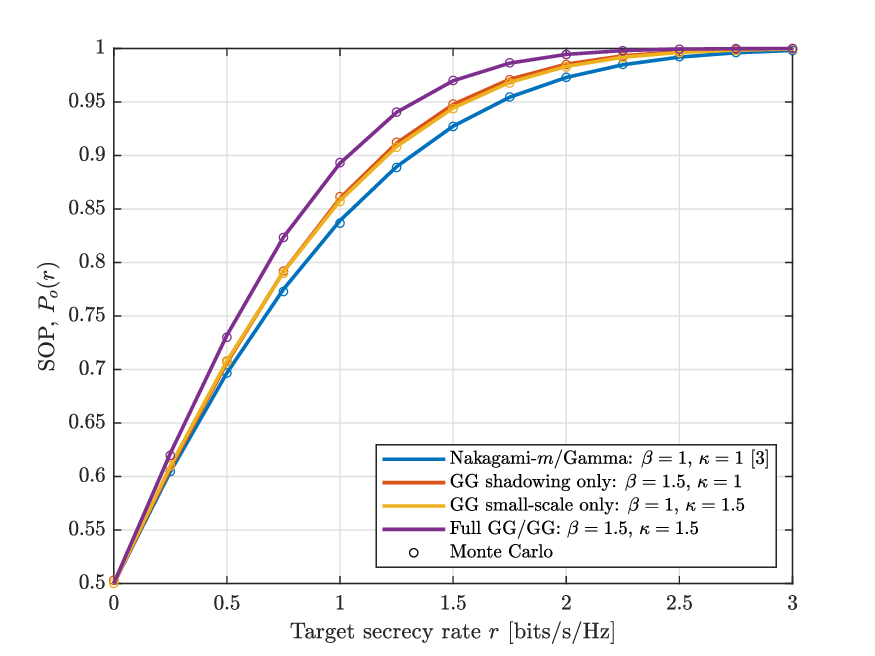}
    \caption{SOP-vs-target-rate plot for the GG/GG fading model}
    \label{fig:Fig_1}
\end{figure}

\begin{figure*}[t]
    \centering
    \includegraphics[width=\linewidth]{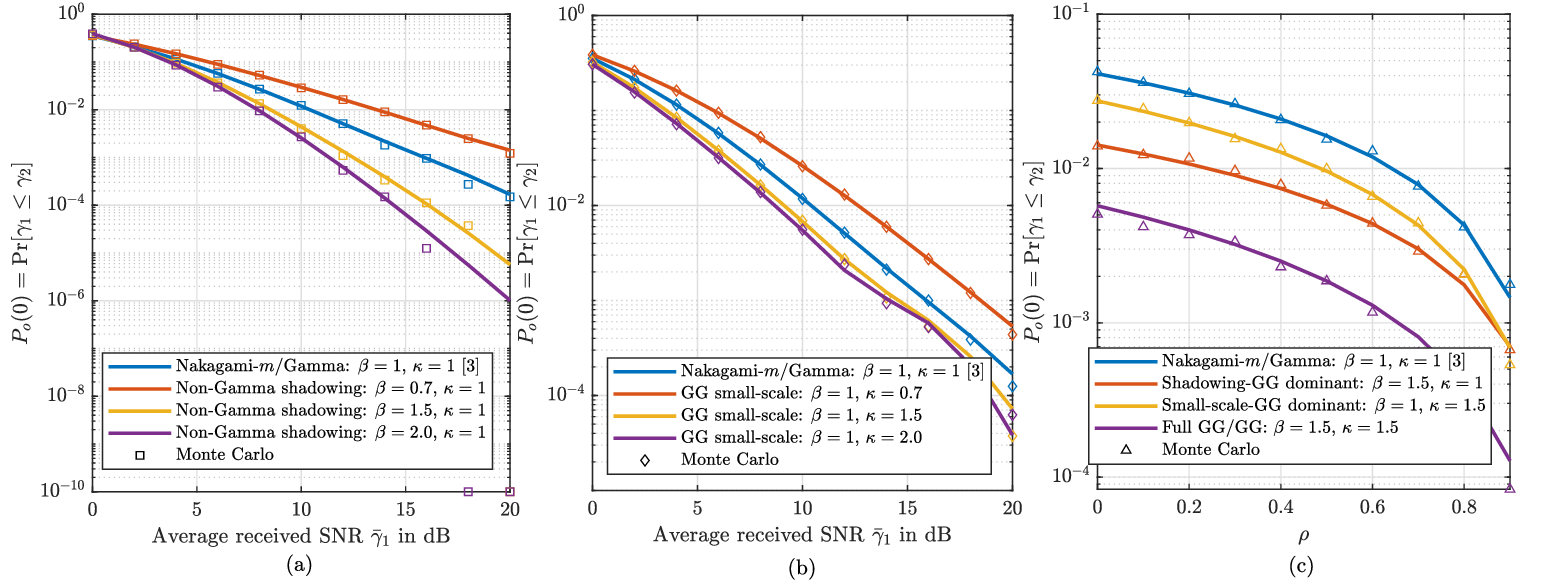}
    \caption{Effects of the GG/GG fading parameters on the zero-rate secrecy outage probability $P_o(0)$: (a) impact of the shadowing exponent $\beta$; (b) impact of the small-scale fading parameter $\kappa$ for fixed $\beta=1$; and (c) impact of the correlation coefficient $\rho$ of the underlying Gamma pair.}
    \label{fig:Fig_2}
\end{figure*}

\subsection{Secrecy Outage Probability at a General Rate}

For $r>0$, define $h(x_2,r)\triangleq 2^r(1+x_2)-1$. For a fixed \((i,j)\), the integral of the link-1 survival function over the link-2 PDF is the secrecy-success probability conditioned on \((i,j)\). Therefore,
\begin{equation}
\begin{aligned}
P_o(r)
&=1-
\sum_{i=0}^{\infty}\sum_{j=0}^{\infty}w_{ij}
\int_0^\infty
G\big(h(x_2,r);i+k_1,m_1,\beta_1,\\ 
&\quad\kappa_1,D_1\big)\times\kappa(x_2;i+j+k_2,m_2,\beta_2,\kappa_2,D_2)\,dx_2 .
\end{aligned}
\label{eq:SOPgen}
\end{equation}
At \(r=0\), \(h(x_2,0)=x_2\), and the double-series integral in
\eqref{eq:SOPgen} reduces to
\(\sum_{i,j}w_{ij}Q_{ij}=\mathrm{PNZSC}\). Hence,
\eqref{eq:SOPgen} consistently gives
\(P_o(0)=1-\mathrm{PNZSC}\).

The convolution technique of Proposition~\ref{prop:4prime} cannot be
applied directly for \(r>0\) because \(h(x_2,r)\) is affine rather than purely multiplicative in \(x_2\). The Mellin--Barnes convolution identity requires the two functions to share the integration variable up to a multiplicative scale, which is prevented by the additive term \(2^r-1\) in \(h(x_2,r)\).

For the Nakagami-\(m\)/Gamma baseline, corresponding to
\(\beta_\ell=\kappa_\ell=1\), \cite{8060499} exploits the Bessel-\(K\)
representation and Kummer's transformation to construct an efficient
numerical quadrature. This specific reduction does not directly extend to
the general GG/GG kernel. Therefore, \eqref{eq:SOPgen} is retained as an
exact double-series representation with one residual one-dimensional
integral per \((i,j)\) term, which can be evaluated numerically using the
Fox-\(H\) representations of the survival function and PDF.

\section{Numerical Results and Discussion}
\label{sec:numerical_results}

This section evaluates the secrecy outage performance of the proposed correlated composite GG/GG fading model. Fig.~\ref{fig:Fig_1} depicts $P_o(r)$ versus the target secrecy rate $r$ for $k_1=k_2=1$, $m_1=m_2=4$, $\rho=0.6$, and $\bar{\gamma}_1=\bar{\gamma}_2=4$ dB. The outage probability increases with $r$ for all cases, since a higher target secrecy rate requires a stronger legitimate channel advantage. Because the legitimate and eavesdropping links have the same average SNR, the outage probability is already high at low rates and approaches unity as $r$ increases. The difference between the baseline, the GG-shadowing-only case, the GG-small-scale-only case, and the full GG/GG case shows that both $\beta$ and $\kappa$ significantly affect secrecy performance.

Fig.~\ref{fig:Fig_2}(a)-(c) demonstrates the effects of $\beta$, $\kappa$, and $\rho$ on the zero-rate outage probability $P_o(0)$. In Fig.~\ref{fig:Fig_2}(a), different values of the shadowing exponent $\beta$ lead to clearly different outage levels, confirming the importance of generalized shadowing. In Fig.~\ref{fig:Fig_2}(b), with $\beta=1$ fixed, varying $\kappa$ changes both the outage level and its decay with increasing $\bar{\gamma}_1$, showing the impact of generalized small-scale fading. In Fig.~\ref{fig:Fig_2}(c), $P_o(0)$ varies noticeably with the correlation parameter $\rho$ of the underlying Gamma pair, highlighting the role of correlated shadowing between the legitimate and eavesdropping links. 

Overall, Figs.~\ref{fig:Fig_1} and~\ref{fig:Fig_2} show that the proposed GG/GG model provides a more flexible secrecy outage characterization than the Nakagami-$m$/Gamma baseline. The parameter $\beta$ captures generalized shadowing effects, while $\kappa$ captures generalized small-scale fading effects. Their joint variation enables the model to describe a wider range of physical-layer security behaviors under correlated composite fading.

\section{Conclusion}
\label{sec:conclusion}

This paper developed a unified secrecy-outage framework for correlated composite generalized-Gamma/generalized-Gamma fading channels. Closed-form expressions for the single-link PDF, joint distribution, survival function, and zero-rate SOP/PNZSC were derived using Mellin transforms and Fox-$H$ functions, while the general-rate SOP was represented by an exact double series with one residual integral per term. The model includes the Nakagami-$m$/generalized-Gamma and Nakagami-$m$/Gamma cases as special cases. Numerical and Monte Carlo results confirmed the analysis and showed the significant effects of the fading exponents and shadowing correlation on secrecy performance.

\bibliographystyle{IEEEtran}
\bibliography{References.bib}

@ARTICLE{11036803,
  author={Zhang, Xiaoqi and Ma, Xu and Zhang, Haijun and Ren, Yuzheng and Cao, Difei and Leung, Victor C. M.},
  journal={IEEE Communications Letters}, 
  title={Physical Layer Security Enhancement for Beyond Diagonal {RIS-Aided MIMO} Communications}, 
  year={2025},
  volume={29},
  number={8},
  pages={1919-1923},
  doi={10.1109/LCOMM.2025.3579860}}

@ARTICLE{8060499,
  author={Alexandropoulos, George C. and Peppas, Kostas P.},
  journal={IEEE Communications Letters}, 
  title={Secrecy Outage Analysis Over Correlated Composite {Nakagami-$m$} /Gamma Fading Channels}, 
  year={2018},
  volume={22},
  number={1},
  pages={77-80},
  doi={10.1109/LCOMM.2017.2760255}}

@book{mathai2009h,
  title={The H-function: theory and applications},
  author={Mathai, Arakaparampil M and Saxena, Ram Kishore and Haubold, Hans J},
  year={2009},
  publisher={Springer Science \& Business Media}
}

@ARTICLE{6772207,
  author={Wyner, A. D.},
  journal={The Bell System Technical Journal}, 
  title={The wire-tap channel}, 
  year={1975},
  volume={54},
  number={8},
  pages={1355-1387},
  doi={10.1002/j.1538-7305.1975.tb02040.x}}

@ARTICLE{9239955,
  author={Badarneh, Osamah S. and Sofotasios, Paschalis C. and Muhaidat, Sami and Cotton, Simon L. and Rabie, Khaled M. and Aldhahir, Naofal},
  journal={IEEE Access}, 
  title={Achievable Physical-Layer Security Over Composite Fading Channels}, 
  year={2020},
  volume={8},
  number={},
  pages={195772-195787},
  doi={10.1109/ACCESS.2020.3033893}}

@INPROCEEDINGS{9768319,
  author={Eldokmak, Youssef M. and Ismail, Mahmoud H. and Hassan, Mohamed S.},
  booktitle={2022 Wireless Telecommunications Symposium (WTS)}, 
  title={Physical Layer Security Analysis Over Composite Generalized Gamma-Lognormal Fading Channels}, 
  year={2022},
  volume={},
  number={},
  pages={1-6},
  doi={10.1109/WTS53620.2022.9768319}}

@ARTICLE{8917797,
  author={N. Kamga, Gervais and Aïssa, Sonia and Rasethuntsa, Tau R. and Alouini, Mohamed-Slim},
  journal={IEEE Transactions on Wireless Communications}, 
  title={Mixed {RF/FSO} Communications With Outdated-{CSI}-Based Relay Selection Under Double Generalized Gamma Turbulence, Generalized Pointing Errors, and {Nakagami-$m$} Fading}, 
  year={2021},
  volume={20},
  number={5},
  pages={2761-2775},
  doi={10.1109/TWC.2019.2954866}}

@ARTICLE{5281755,
  author={Bithas, Petros S. and Sagias, Nikos C. and Mathiopoulos, P. Takis},
  journal={IEEE Transactions on Communications}, 
  title={The bivariate generalized-{$K$} ({$K_G$}) distribution and its application to diversity receivers}, 
  year={2009},
  volume={57},
  number={9},
  pages={2655-2662},
  doi={10.1109/TCOMM.2009.09.080039}}

\end{document}